\journalname{Nonlinear Dyn}
\begin{document}

\title{Forecast analysis of the epidemics trend of COVID-19 in the United States by a generalized fractional-order SEIR model\thanks{This work is supported by the National Nature Science Foundation of China under Grant 61772063, Beijing Natural Science Foundation under Grant Z180005.
}}


\author{Conghui Xu        \and
         Yongguang Yu* \and
         YangQuan Chen* \and
         Zhenzhen Lu
}


\institute{
Conghui Xu. Yongguang Yu (\Letter). Zhenzhen Lu. \\
Department of Mathematics, School of Science, Beijing Jiaotong University, Beijing 100044, China.\\
\email{ygyu@bjtu.edu.cn}\\
YangQuan Chen (\Letter)\\
Mechatronics, Embedded Systems and Automation Lab, University of California, Merced, CA 95343, USA\\
\email{ychen53@ucmerced.edu}}
\date{Received: date / Accepted: date}

\maketitle

\begin{abstract}
In this paper, a generalized fractional-order SEIR model is proposed, denoted by SEIQRP model, which has a basic guiding significance for the prediction of the possible outbreak of infectious diseases like COVID-19 and other insect diseases in the future. Firstly, some qualitative properties of the model are analyzed. The basic reproduction number $\mathnormal{R}_{0}$ is derived. When $\mathnormal{R}_{0}<1$, the disease-free equilibrium point is unique and locally asymptotically stable. When $\mathnormal{R}_{0}>1$, the endemic equilibrium point is also unique. Furthermore, some conditions are established to ensure the local asymptotic stability of disease-free and endemic equilibrium points. The trend of COVID-19 spread in the United States is predicted. Considering the influence of the individual behavior and government mitigation measurement, a modified SEIQRP model is proposed, defined as SEIQRPD model. According to the real data of the United States, it is found that our improved model has a better prediction ability for the epidemic trend in the next two weeks. Hence, the epidemic trend of the United States in the next two weeks is investigated, and the peak of isolated cases are predicted. The modified SEIQRP model successfully capture the development process of COVID-19, which provides an important reference for understanding the trend of the outbreak.
\keywords{COVID-19 \and Fractional order \and  Generalized SEIR model \and Epidemic \and Peak prediction}
\end{abstract}

\section{Introduction}
\label{intro}
The outbreak of the coronavirus disease (COVID-19) in 2019 occurred in Wuhan, China, at the end of 2019. This is a severe respiratory syndrome caused by the novel coronavirus of zoonotic origin\cite{1}. The Chinese government has implemented many measures, including the establishment of specialized hospitals and restrictions on travel, to reduce the spread. By April 20, 2020, the outbreak in China has been basically controlled. However, the outbreak is still rampant all over the world. At present, the United States, Italy, Spain and other countries are still in the rising stage of the outbreak. It has posed a great threat to the public health and safety of the world.

At present, many countries have adopted mitigation measures to restrict travel and public gatherings, which have a serious impact on the economy. Therefore, it is very important to predict the development trend of this epidemic and estimate the peak of the isolated cases. The epidemic model is a basic tool to research the dynamic behaviours of disease and predict the spreading trend of disease. Establishing a reasonable epidemic model can effectively characterize the development process of the disease. Therefore, a generalized SEIR epidemic model is proposed in this paper, and is defined as SEIQRP model. Some qualitative properties of this model are first analyzed, including the existence and uniqueness of the disease-free and the endemic equilibrium points. Then, conditions are also established to ensure the local asymptotic stability of both disease-free and endemic equilibrium points.

So far, many scholars have researched COVID-19 from different perspectives \cite{1,2,3,4,5,23}. In \cite{6}, the epidemics trend of COVID-19 in China was predicted under public health interventions. In \cite{7}, the basic reproduction number of COVID-19 in China was estimated and the data-driven analysis was performed in the early phase of the outbreak. In order to predict the cumulative number of confirmed cases and combine the actual measures taken by the government on the outbreak of COVID-19, we further put forward an improved SEIQRP model, denoted by SEIQRPD model. At present, we can obtain the epidemic data of COVID-19 outbreak in the United States before April 20, 2020. We use these data and the improved model to predict the epidemic trend of the United States in the next two weeks, and estimate the peak of isolated cases. Firstly, the data before April 5 was selected to identify the model parameters, and the prediction ability of the improved model with the epidemiological data from April 6 to April 20 is verified. Then, the cumulative number of confirmed cases and isolated cases are predicted in the next two weeks. The peak of isolated cases is thus predicted.

In recent years, with the continuous development of fractional calculus theory, fractional order systems modeling approaches have been applied in various engineering and non-engineering fields \cite{19,20,21}. Compared with the short memory of the integer derivative, the fractional derivative has the information of the whole time interval or long memory. It is more accurate to describe the biological behavior of population by using fractional differential equation.

The rest of this manuscript is structured as below. In Section 2, the fractional SEIQRP model and the modified SEIQRP model is proposed. Some qualitative properties of the SEIQRP model are discussed in Section 3. In Section 4, the prediction ability of the modified SEIQRP model is verified by using real data. The disease development in the United States in the next two weeks after April 21, 2020 is predicted. Finally, conclusion and some future works are discussed in Section 5.

\section{Preliminaries and Model Derivation}
\label{sec:1}
\subsection{Preliminaries}
\label{sec:2}
In this section, some useful lemmas and definitions will be given to analyze some results of this paper.

\begin{definition} \cite{8} The Caputo fractional order derivative is given below
\[_{0}^{C}D^{\alpha}_{t}f(t)=\frac{1}{\Gamma(n-\alpha)}\int_{0}^{t}(t-\xi)^{n-\alpha-1}f^{(n)}(\xi)d\xi,\]
where $n-1\leq\alpha\leq n$. In order to simplify the symbolic representation, the Caputo fractional differential operator $_{0}^{C}D^{\alpha}_{t}$ in this paper are represented by $D^{\alpha}$.
\end{definition}
\begin{definition} \cite{8,22} The Mittag-Leffler function is given below
\[E_{\alpha}(z)=\sum_{k=0}^{\infty}\frac{z^{k}}{\Gamma(k\alpha+1)},\]
where $n-1\leq\alpha\leq n$.
\end{definition}
Consider the following $n$-dimensional fractional order differential equation system
\begin{equation}
D^{\alpha}X(t)=AX(t),\;\;x(0)=x_{0},
\end{equation}
where $\alpha\in(0,1)$, $X(t)=(x_{1}(t),x_{2}(t),\dots,x_{n}(t))^{T}$ is an $n$ dimensional state vector and $A$ is an $n\times n$ constant matrix.
\begin{lemma}\cite{9} For the corresponding linear time-invariant system (1), the following results are true:\\
(i) The zero solution is asymptotically stable, if and only if all eigenvalues $s_{j}$ $(j=1,2,\dots,n)$ of $A$ satisfy $|\arg(s_{j})|>\frac{\alpha\pi}{2}$.\\
(ii) The zero solution is stable, if and only if all eigenvalues $s_{j}$ of $A$ satisfy $|\arg(s_{j})|\geq\frac{\alpha\pi}{2}$ and eigenvalues with $|\arg(s_{j})|=\frac{\alpha\pi}{2}$ have the same algebraic multiplicity and geometric multiplicity.
\end{lemma}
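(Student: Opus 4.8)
The plan is to pass to the Jordan form of $A$ and reduce everything to the large-time asymptotics of the Mittag-Leffler function. First I would write $A=PJP^{-1}$ with $J=\operatorname{diag}(J_1,\dots,J_r)$ its Jordan form, each block $J_i=s_iI_{m_i}+N$ where $N$ is the nilpotent shift of the appropriate size. Applying the Laplace transform to (1) (using $\mathcal{L}\{D^{\alpha}X\}(\lambda)=\lambda^{\alpha}\widehat X(\lambda)-\lambda^{\alpha-1}x_0$, valid since $\alpha\in(0,1)$) gives $(\lambda^{\alpha}I-A)\widehat X(\lambda)=\lambda^{\alpha-1}x_0$, and inverting yields the closed form $X(t)=E_{\alpha}(At^{\alpha})x_0=P\operatorname{diag}\!\bigl(E_{\alpha}(J_1t^{\alpha}),\dots,E_{\alpha}(J_rt^{\alpha})\bigr)P^{-1}x_0$. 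Because $P$ is a fixed invertible matrix, boundedness or decay of every solution is equivalent to boundedness or decay of each $E_{\alpha}(J_it^{\alpha})$, so it suffices to treat one block, and there $E_{\alpha}(J_it^{\alpha})=\sum_{k=0}^{m_i-1}\tfrac{t^{\alpha k}}{k!}E_{\alpha}^{(k)}(s_it^{\alpha})N^{k}$. Thus the whole question reduces to the behaviour of $t^{\alpha k}E_{\alpha}^{(k)}(s_it^{\alpha})$ as $t\to\infty$ for $0\le k\le m_i-1$.

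Next I would invoke the classical asymptotic expansion of the Mittag-Leffler function: fixing $\mu$ with $\tfrac{\alpha\pi}{2}<\mu<\min\{\pi,\alpha\pi\}$, one has, for $|\arg z|\le\mu$, $E_{\alpha}(z)=\tfrac1\alpha\exp(z^{1/\alpha})-\sum_{j=1}^{p}\tfrac{z^{-j}}{\Gamma(1-\alpha j)}+O(|z|^{-p-1})$, whereas for $\mu\le|\arg z|\le\pi$ the exponential term is absent, $E_{\alpha}(z)=-\sum_{j=1}^{p}\tfrac{z^{-j}}{\Gamma(1-\alpha j)}+O(|z|^{-p-1})$; the same type of expansion, obtained by termwise differentiation, holds for each derivative $E_{\alpha}^{(k)}$. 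The crucial observation is that along $z=s_it^{\alpha}$ the argument $\arg z=\arg s_i$ is constant in $t$ while $|z|\to\infty$, and $\operatorname{Re}z^{1/\alpha}=|s_i|^{1/\alpha}\cos\!\bigl(\tfrac{\arg s_i}{\alpha}\bigr)t$, whose sign is negative, zero, or positive exactly according as $|\arg s_i|$ is greater than, equal to, or less than $\tfrac{\alpha\pi}{2}$ (the eigenvalue $s_i=0$ being a separate degenerate case, where $E_{\alpha}(0)=1$).

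The two assertions then follow by case analysis. For (i): if every eigenvalue has $|\arg s_i|>\tfrac{\alpha\pi}{2}$, then in each block the exponential term either is absent or decays faster than any power, so $t^{\alpha k}E_{\alpha}^{(k)}(s_it^{\alpha})$ is controlled by the algebraic tail and behaves like $t^{\alpha k}\cdot t^{-\alpha(k+1)}=t^{-\alpha}\to0$; hence $E_{\alpha}(At^{\alpha})\to0$ and the origin is asymptotically stable. Conversely, if some $|\arg s_i|\le\tfrac{\alpha\pi}{2}$ (or $s_i=0$), taking $x_0$ in the corresponding eigenspace produces a solution whose norm stays bounded below by a positive constant when $|\arg s_i|=\tfrac{\alpha\pi}{2}$ (there $|E_{\alpha}(s_it^{\alpha})|\to\tfrac1\alpha$) and is unbounded when $|\arg s_i|<\tfrac{\alpha\pi}{2}$ or $s_i=0$, so asymptotic stability fails. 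For (ii): under the stated hypotheses the non-critical blocks decay as above, while each critical eigenvalue ($|\arg s_i|=\tfrac{\alpha\pi}{2}$) occupies a $1\times1$ block because its algebraic and geometric multiplicities agree, so $|E_{\alpha}(s_it^{\alpha})|=\tfrac1\alpha\bigl|\exp(z^{1/\alpha})\bigr|+o(1)=\tfrac1\alpha+o(1)$ remains bounded; thus all solutions are bounded and the origin is stable. Conversely, one eigenvalue with $|\arg s_i|<\tfrac{\alpha\pi}{2}$ forces exponential growth, and a critical eigenvalue sitting inside a Jordan block of size $\ge2$ makes the $k=1$ term $t^{\alpha}E_{\alpha}'(s_it^{\alpha})N$ grow like $t^{\alpha}\cdot t^{1-\alpha}=t$, again destroying stability.

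The main obstacle, and the step I would spend the most care on, is the uniform handling of the Mittag-Leffler asymptotics together with the polynomial factors $t^{\alpha k}$ produced by non-trivial Jordan blocks: one must verify that these factors are genuinely dominated in the decaying regimes, that they yield honest unboundedness on the critical ray, and that the sector restriction $\mu<\min\{\pi,\alpha\pi\}$ really covers every possible value of $\arg s_i\in[-\pi,\pi]$, as well as fix a single consistent branch of $z\mapsto z^{1/\alpha}$ throughout. Once the block decomposition and the asymptotic expansion are in place, the rest is bookkeeping.
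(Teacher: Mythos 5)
The paper does not prove this lemma at all: it is quoted from the cited reference \cite{9} (it is Matignon's stability theorem for fractional linear systems), so there is no in-paper proof to compare against. Your reconstruction is the standard argument from that literature --- Laplace transform/Jordan reduction, the block formula $E_{\alpha}(Jt^{\alpha})=\sum_{k}\frac{t^{\alpha k}}{k!}E_{\alpha}^{(k)}(st^{\alpha})N^{k}$, and the sectorial asymptotics of the Mittag-Leffler function --- and the case analysis, including the growth rate $t^{\alpha}\cdot t^{1-\alpha}=t$ for a critical eigenvalue in a Jordan block of size $\ge 2$, is correct. One small slip: when $s_{i}=0$ (in a $1\times 1$ block) the corresponding solution is the constant $E_{\alpha}(0)x_{0}=x_{0}$, which is bounded, not ``unbounded'' as you group it; it still defeats asymptotic stability because it does not decay, which is all the converse of (i) requires, but the zero eigenvalue belongs with the ``bounded below by a positive constant'' case --- and if $0$ sits in a Jordan block of size $\ge 2$ you get the polynomial growth $t^{\alpha k}/\Gamma(\alpha k+1)$ directly from the defining series, with no need for the asymptotic expansion (which in any case requires $|z|\to\infty$ and so says nothing along $z=0\cdot t^{\alpha}$).
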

\subsection{SEIQRP model}
\label{sec:3}
\begin{figure}
    \begin{center}
        \includegraphics[width=1\linewidth]{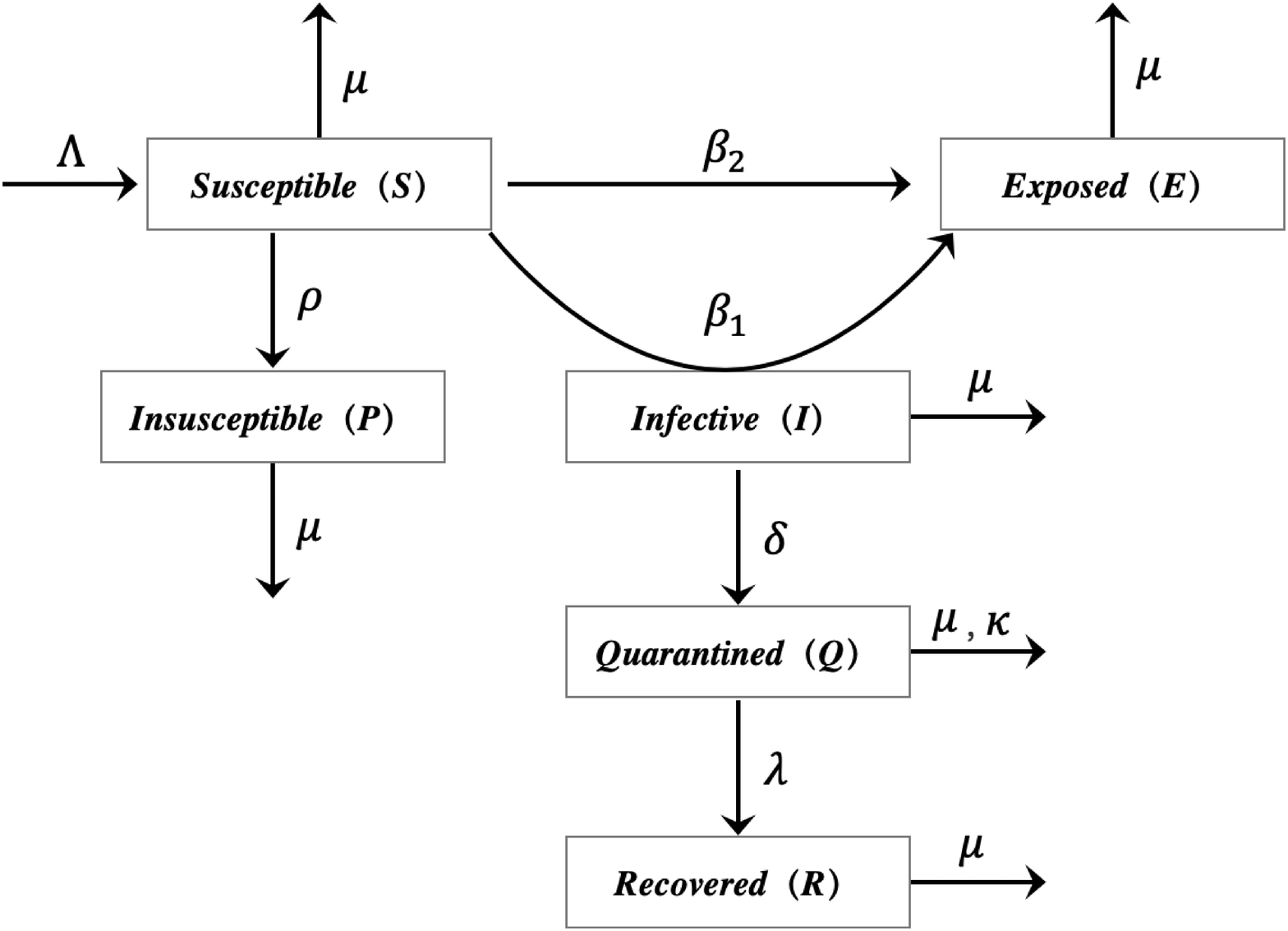}
       \caption{ Transmission diagram for the model involving six population classes.}
    \end{center}
    \end{figure}
The outbreak of COVID-19 has had a great impact on the economic growth of any country and daily life of any human. In order to control and prevent the possible outbreak of infectious diseases like the COVID-19 or other insect diseases in the future, it is very important to establish an appropriate model. The transmission diagram of the generalized SEIR model proposed in this paper is shown in Figure 1. We divide the total population into six distinct epidemic classes: susceptible, exposed, infectious, quarantined, recovered and insusceptible. We will represent the number of individuals at time $t$ in the above classes by $S(t)$, $E(t)$, $I(t)$, $Q(t)$, $R(t)$ and $P(t)$, respectively. The specific explanations of the above six categories are as follows:
\begin{itemize}
\item[$\bullet$] Susceptible $S(t)$: the number of uninfected individuals at the time $t$.
\item[$\bullet$] Exposed $E(t)$: the number of infected individuals at the time $t$ but still in incubation period (without clinical symptoms and low infectivity).
\item[$\bullet$] Infectious $I(t)$: the number of infected individuals at the time $t$ (with obvious clinical symptoms).
\item[$\bullet$] Quarantined $Q(t)$: the number of individuals who have been diagnosed and isolated at the time $t$.
\item[$\bullet$] Recovered $R(t)$: the number of recovered individuals at the time $t$.
\item[$\bullet$] Insusceptible $P(t)$: the number of susceptible individuals who are not exposed to the external environment at the time $t$.
\end{itemize}

 The incidence rate plays a very important role in the epidemic, which can describe the evolution of infectious disease. According to the spread of different diseases in different regions, there are many forms of incidence rate.\cite{10,11,12}. We expect to establish a generalized incidence rate, which can contain most of these specific forms. Therefore, the SEIQRP model with general incidence rate is given by
 \begin{eqnarray}
 \left\{\begin{split}
D^{\alpha}S(t)=&\Lambda-\beta_{1}f_{1}(S(t))g_{1}(I(t))-\beta_{2}f_{2}(S(t))g_{2}(E(t))\\&-\mu S(t)-\rho S(t),\\
D^{\alpha}E(t)=&\beta_{1}f_{1}(S(t))g_{1}(I(t))+\beta_{2}f_{2}(S(t))g_{2}(E(t))\\&-\epsilon E(t)-\mu E(t),\\
D^{\alpha}I(t)=&\epsilon E(t)-(\delta+\mu)I(t),\\
D^{\alpha}Q(t)=&\delta I(t)-(\lambda+\kappa+\mu)Q(t),\\
D^{\alpha}R(t)=&\lambda Q(t)-\mu R(t),\\
D^{\alpha}P(t)=&\rho S(t)-\mu P(t),
  \end{split}
 \right.
\end{eqnarray}
where $\Lambda$ is the inflow number of susceptible individuals, $\beta_{1}$ and $\beta_{2}$ denote the infection rates of the infected individuals and the exposed individuals, respectively, $\rho$ is the protection rate, $\epsilon$ represent the incubation rate. $\delta$ is the rate at which symptomatic infections are diagnosed and quarantined. $\lambda$ and $\kappa$ represent the cure rate of isolated individuals and the death rate caused by the disease, respectively, $\mu$ is the natural mortality, the incidence rate $\beta_{1}f_{1}(S)g_{1}(I)$, $\beta_{2}f_{2}(S)g_{2}(E)$ are used to describe the transmission of diseases, which satisfy the following conditions\cite{16}:\\
(i) $f_{1}(0)=f_{2}(0)=g_{1}(0)=g_{2}(0)=0$,\\
(ii) $f_{1}(S)>0$, $f_{2}(S)>0$, $g_{1}(I)>0$, $g_{2}(E)>0$, for any $S$, $E$, $I>0$,\\
(iii) $f'_{1}(S)>0$, $f'_{2}(S)>0$, $g'_{1}(I)>0$, $g'_{2}(E)>0$, for any $S$, $E$, $I>0$,\\
(iv) $g'_{1}(I)-\frac{g_{1}(I)}{I}\leq 0$, $g'_{2}(E)-\frac{g_{2}(E)}{E}\leq 0$, for any $E$, $I>0$.\\

\subsection{Modified SEIQRP model}
\label{sec:4}
In the following, we will combine the actual situation and government mitigation policy to improve the SEIQRP model (2). Firstly, during the outbreak of COVID-19, we need to make reasonable assumptions according to the actual mitigation policies and circumstances.
\begin{itemize}
\item[$\bullet$] During the COVID-19 outbreak, population mobility is strictly controlled by many countries. Especially, the policy of city closure was implemented in Hubei province, China. Therefore, the impact of migration will not be considered in the improved model.
\item[$\bullet$] For the prediction of this short-term virus transmission, the impact of natural mortality will not be considered. It is assumed that novel coronavirus is the only cause of death during the outbreak.
\item[$\bullet$] In order to predict the trend of cumulative confirmed cases, it is necessary to simulate the number of death cases. Therefore, a new class $D(t)$ to the SEIQRP model will be added in the modified model, which denotes the number of death cases at time $t$.
\item[$\bullet$] The time-varying cure rate, mortality rate and infection rate will be applied to the improved model. This can better simulate the impact of the improvement of medical conditions and government control on individuals in reality.
\end{itemize}
Based on the above assumptions and analysis, the following improved model for COVID-19 is proposed.
 \begin{eqnarray}
 \left\{\begin{split}
D^{\alpha}S(t)=&-\beta_{1}(t)f_{1}(S(t))g_{1}(I(t))-\beta_{2}f_{2}(S(t))g_{2}(E(t))\\&-\rho S(t),\\
D^{\alpha}E(t)=&\beta_{1}(t)f_{1}(S(t))g_{1}(I(t))+\beta_{2}f_{2}(S(t))g_{2}(E(t))\\&-\epsilon E(t),\\
D^{\alpha}I(t)=&\epsilon E(t)-\delta I(t),\\
D^{\alpha}Q(t)=&\delta I(t)-(\lambda(t)+\kappa(t))Q(t),\\
D^{\alpha}R(t)=&\lambda(t) Q(t),\\
D^{\alpha}P(t)=&\rho S(t),\\
D^{\alpha}D(t)=&\kappa(t) Q(t),
  \end{split}
 \right.
\end{eqnarray}
where $\beta_{1}(t)=\sigma_{1}\exp{(-\sigma_{2} t)}$, $\lambda(t)=\lambda_{1}(1-\exp{(\lambda_{2}t)})$ and $\kappa(t)=\kappa_{1}\exp{(-\kappa_{2}t)}$. The parameters $\sigma_{1}$, $\sigma_{2}$, $\lambda_{1}$, $\lambda_{2}$, $\kappa_{1}$ and $\kappa_{2}$ are all positive, where $\sigma$ is used to simulate the intensity of government control. It should be emphasized that the protection rate $\rho$ for susceptible individuals also reflects the intensity of government control.
\section{Qualitative Analysis of the SEIQRP Model}
\label{sec:5}
\subsection{The Existence and Uniqueness of Equilibrium Point}
\label{sec:6}
Obviously, the right side of system (2) satisfies the local Lipschitz condition, then there exists a unique solution of system (2) \cite{8,15}.
\begin{theorem}
The solutions of system (2) are non-negative, and the closed set $\Omega=\{(S,E,I,Q,R,P)\in \mathbb{R}_{+}^{6}: 0\leq S+E+I+Q+R+D\leq\frac{\Lambda}{\mu}\}$ is a positive invariant set of system (2).
\end{theorem}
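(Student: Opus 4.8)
The plan is to establish the two assertions separately: first non-negativity of solutions, then positive invariance of $\Omega$. For non-negativity, I would argue coordinate by coordinate that the boundary of the non-negative orthant cannot be crossed. Concretely, suppose $S(t)$ is the first variable to reach $0$ at some time $t_{1}$, with all components non-negative on $[0,t_{1}]$. On the face $S=0$ the right-hand side of the first equation of (2) reduces to $\Lambda \geq 0$, using condition (i) that $f_{1}(0)=f_{2}(0)=0$. Then a fractional comparison argument — invoking the fact that if $D^{\alpha}x(t_{1})\geq 0$ when $x(t_{1})=0$ and $x\geq 0$ before $t_{1}$, the solution stays non-negative (the Caputo-derivative analogue of the standard ODE tangency lemma, e.g. the generalized mean value theorem for Caputo derivatives from \cite{8}) — shows $S$ cannot become negative. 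The same reasoning applies to each of $E, I, Q, R, P$: on each coordinate face the corresponding vector-field component is a sum of terms each of which is non-negative there (for $E$: the incidence terms are $\geq 0$; for $I$: $\epsilon E \geq 0$; for $Q$: $\delta I \geq 0$; for $R$: $\lambda Q \geq 0$; for $P$: $\rho S \geq 0$, all by the already-established non-negativity of the other variables and conditions (i)–(ii) on $f_i, g_i$). Hence $\mathbb{R}_{+}^{6}$ is invariant.

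For the invariant set $\Omega$, I would introduce the total population $N(t)=S(t)+E(t)+I(t)+Q(t)+R(t)+P(t)$ (noting the statement as written mixes $D$ into the sum, which I would treat as a typo for $P$, consistent with system (2) having no $D$ compartment). By linearity of the Caputo operator, summing all six equations of (2) causes every incidence term and every internal transfer term ($\epsilon E$, $\delta I$, $\lambda Q$, $\kappa Q$ partially, $\rho S$) to cancel in pairs, leaving
\[
D^{\alpha}N(t)=\Lambda-\mu N(t)-\kappa Q(t)\leq \Lambda-\mu N(t),
\]
since $\kappa, Q \geq 0$. Applying the standard fractional comparison principle / the explicit Mittag-Leffler solution of the linear fractional inequality $D^{\alpha}N \leq \Lambda-\mu N$ (again from \cite{8,9}), one gets
\[
N(t)\leq \Big(N(0)-\tfrac{\Lambda}{\mu}\Big)E_{\alpha}(-\mu t^{\alpha})+\tfrac{\Lambda}{\mu}.
\]
Since $E_{\alpha}(-\mu t^{\alpha})\in[0,1]$ for $t\geq 0$, it follows that $N(0)\leq \Lambda/\mu$ implies $N(t)\leq \Lambda/\mu$ for all $t\geq 0$, and combined with non-negativity this gives $0\leq S+E+I+Q+R+P\leq \Lambda/\mu$, i.e. the trajectory stays in $\Omega$.

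The main obstacle is making the fractional tangency/comparison step rigorous rather than merely quoting the integer-order intuition: unlike the classical case, $D^{\alpha}x(t_{1})\geq 0$ at a boundary touch does not instantly follow from $x(t_{1})=0$ being a minimum, and one must use the correct lemma — typically the result that if $x\in C[0,T]$ has $x(t_{1})=0$ and $x(t)\geq 0$ on $[0,t_{1}]$ then $({}_{0}^{C}D^{\alpha}_{t}x)(t_{1})\leq 0$ — to derive a contradiction with the sign of the vector field on the face. I would state this lemma explicitly (citing \cite{8}) before running the coordinate-wise argument. A secondary, purely cosmetic point is reconciling the $D$ appearing in the set definition with the absence of a $D$ equation in model (2); I would simply note that $\Omega$ is meant with $P$ in place of $D$.
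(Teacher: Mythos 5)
Your proposal is correct, and the second half (invariance of the bound $N\leq\Lambda/\mu$) is essentially identical to the paper's: both sum the six equations to get $D^{\alpha}N=\Lambda-\mu N-\kappa Q\leq\Lambda-\mu N$ and invoke a fractional comparison principle to obtain $N(t)\leq(N(0)-\Lambda/\mu)E_{\alpha}(-\mu t^{\alpha})+\Lambda/\mu$ (the paper misprints the left-hand side as $D^{\alpha}N(t)$, and you are also right that the $D$ in the definition of $\Omega$ is a typo for $P$). Where you genuinely diverge is the non-negativity step. The paper does not argue face by face; it introduces an auxiliary system (4) with identically zero initial data, observes that the zero vector solves it, and then cites a fractional-order comparison theorem (their reference [17]) to conclude that solutions of (2) dominate the zero solution componentwise. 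Your route instead uses the first-exit-time/tangency argument on each coordinate face of $\mathbb{R}_{+}^{6}$, checking via hypotheses (i)--(ii) that each vector-field component is non-negative where its variable vanishes, combined with the Caputo-derivative sign lemma (if $x\geq0$ on $[0,t_{1}]$ and $x(t_{1})=0$ then $D^{\alpha}x(t_{1})\leq0$). Your version is more self-contained and arguably more transparent than the paper's (whose auxiliary system is written with the original variables on the right-hand side, making the comparison somewhat opaque), but it carries the burden you yourself identify: when the field is exactly zero on a face (e.g.\ the $E$-face with $I=S=0$) the tangency lemma alone gives $D^{\alpha}x(t_{1})=0$ rather than a contradiction, so one must either perturb or use a strict-inequality variant to close the argument rigorously. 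Both approaches are at a comparable (and acceptable) level of rigor for this kind of result.
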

\begin{proof}
In order to investigate the non-negativity of solutions of the system (2), we consider the following system
 \begin{eqnarray}
 \left\{\begin{split}
D^{\alpha}S_{1}(t)=&-\beta_{1}f_{1}(S(t))g_{1}(I(t))-\beta_{2}f_{2}(S(t))g_{2}(E(t))\\&-\mu S(t)-\rho S(t),\\
D^{\alpha}E_{1}(t)=&\beta_{1}f_{1}(S(t))g_{1}(I(t))+\beta_{2}f_{2}(S(t))g_{2}(E(t))\\&-\epsilon E(t)-\mu E(t),\\
D^{\alpha}I_{1}(t)=&\epsilon E(t)-(\delta+\mu)I(t),\\
D^{\alpha}Q_{1}(t)=&\delta I(t)-(\lambda+\kappa+\mu)Q(t),\\
D^{\alpha}R_{1}(t)=&\lambda Q(t)-\mu R(t),\\
D^{\alpha}P_{1}(t)=&\rho S(t)-\mu P(t),
  \end{split}
 \right.
\end{eqnarray}
where the initial conditions are $S_{10}=0$, $E_{10}=0$, $I_{10}=0$, $Q_{10}=0$, $R_{10}=0$ and $P_{10}=0$. $(0,0,0,0,0,0)$ is the unique solution of the above system. According to the fractional-order comparison theorem \cite{17}, one can deduce that the solutions of system (2) satisfy $S(t)\geq0$, $E(t)\geq0$, $I(t)\geq0$, $Q(t)\geq0$, $R(t)\geq0$ and $P(t)\geq0$.
By adding six equations of system (2), one can deduce
\[
\begin{split}
D^{\alpha}N(t)&=\Lambda-\mu N(t)-\kappa Q(t)\\&\leq\Lambda-\mu N(t).
\end{split}
\]
By applying the fractional-order comparison theorem, one has
\[D^{\alpha}N(t)\leq(-\frac{\Lambda}{\mu}+N(0))E_{\alpha}(-\mu t^{\alpha})+\frac{\Lambda}{\mu}.\]
When $N(0)\leq\frac{\Lambda}{\mu}$, since $E_{\alpha}(-\mu t^{\alpha})\geq 0$, we have
\[S(t)+E(t)+I(t)+Q(t)+R(t)+P(t)\leq\frac{\Lambda}{\mu}\]
Thus we can draw the result of Theorem 1.
\end{proof}
System (2) has an obvious disease-free equilibrium point $M_{0}=(S_{0},0,0,0,0,P_{0})$, where
\[S_{0}=\frac{\Lambda}{\mu+\rho},\;\;P_{0}=\frac{\rho\Lambda}{\mu(\mu+\rho)}.\]
In order to obtain the endemic equilibrium point of the system (2), we set:
 \begin{equation}
 \left\{\begin{split}
&\Lambda-\beta_{1}f_{1}(S(t))g_{1}(I(t))-\beta_{2}f_{2}(S(t))g_{2}(E(t))\\&-\mu S(t)-\rho S(t)=0,\\
&\beta_{1}f_{1}(S(t))g_{1}(I(t))+\beta_{2}f_{2}(S(t))g_{2}(E(t))\\&-\epsilon E(t)-\mu E(t)=0,\\
&\epsilon E(t)-(\delta+\mu)I(t)=0,\\
&\delta I(t)-(\lambda+\kappa+\mu)Q(t)=0,\\
&\lambda Q(t)-\mu R(t)=0,\\
&\rho S(t)-\mu P(t)=0,
  \end{split}
 \right.
\end{equation}
which implies
\[S=\frac{\Lambda-(\epsilon+\mu)E}{\mu+\rho},\;\;I=\frac{\epsilon E}{\sigma+\mu}.\]
Combining the above equations and the second equation of (5), one has
 \begin{equation}
\begin{split}
 (\mu+\epsilon)E=&\beta_{1}f_{1}(\frac{\Lambda-(\mu+\epsilon)E}{\mu+\rho})g_{1}(\frac{\epsilon E}{\mu+\delta})\\&+\beta_{2}f_{2}(\frac{\Lambda-(\mu+\epsilon)E}{\mu+\rho})g_{2}(E).
   \end{split}
\end{equation}
Define
 \begin{equation}
\begin{split}
 \varphi(E)=&\beta_{1}f_{1}(\frac{\Lambda-(\mu+\epsilon)E}{\mu+\rho})g_{1}(\frac{\epsilon E}{\mu+\delta})\\&+\beta_{2}f_{2}(\frac{\Lambda-(\mu+\epsilon)E}{\mu+\rho})g_{2}(E)-(\mu+\epsilon)E.
   \end{split}
\end{equation}
Note that $\varphi(0)=0$ and $\varphi(\frac{\Lambda}{\mu+\epsilon})=-\Lambda<0$. In order to show that $\varphi(E)=0$ has at least one positive root in the interval $(0,\frac{\Lambda}{\mu+\epsilon}]$, we need to prove that $\varphi'(0)>0$. Hence
 \begin{equation}
\begin{split}
 \varphi'(E)=&-\frac{(\mu+\epsilon)\beta_{1}}{\mu+\rho}f'_{1}(\frac{\Lambda-(\mu+\epsilon)E}{\mu+\rho})g_{1}(\frac{\epsilon E}{\mu+\delta})\\&
 +\frac{\epsilon \beta1}{\mu+\delta}f_{1}(\frac{\Lambda-(\mu+\epsilon)E}{\mu+\rho})g'_{1}(\frac{\epsilon E}{\mu+\delta})\\&
 -\frac{(\mu+\epsilon)\beta_{2}}{\mu+\rho}f'_{2}(\frac{\Lambda-(\mu+\epsilon)E}{\mu+\rho})g_{2}(E)\\&
 +\beta_{2}f_{2}(\frac{\Lambda-(\mu+\epsilon)E}{\mu+\rho})g'_{2}(E)-(\mu+\epsilon).
   \end{split}
\end{equation}
Therefore, we have
 \[
\begin{split}
 \varphi'(0)&=\frac{\epsilon\beta_{1}}{\mu+\delta}f_{1}(S_{0})g'_{1}(0)+\beta_{2}f_{2}(S_{0})g'_{2}(0)-(\mu+\epsilon)\\&=(\mu+\epsilon)(\mathnormal{R}_{0}-1),
   \end{split}
\]
where the basic reproduction number is given by
 \begin{equation}
\mathnormal{R}_{0}=\frac{\epsilon\beta_{1}}{(\mu+\epsilon)(\mu+\delta)}f_{1}(S_{0})g'_{1}(0)+\frac{\beta_{2}}{\mu+\epsilon}f_{2}(S_{0})g'_{2}(0).
\end{equation}
If $\mathnormal{R}_{0}>1$, the system (2) has at least one endemic equilibrium point $M_{*}=(S_{*},E_{*},I_{*},Q_{*},R_{*},P_{*})$, where
\[S_{*}=\frac{\Lambda-(\mu+\epsilon)E_{*}}{\mu+\rho},\;\;I_{*}=\frac{\epsilon E_{*}}{\mu+\delta},\;\;Q_{*}=\frac{\delta I_{*}}{\mu+\lambda+\kappa},\]
\[R_{*}=\frac{\lambda Q_{*}}{\mu},\;\;P_{*}=\frac{\rho S_{*}}{\mu}.\]
In the following section, we show that endemic equilibrium point $M_{*}$ is unique. According to the above analysis and hypothesis (i)-(iii), one has
 \begin{equation}
\begin{split}
 \varphi'(E_{*})=&-\frac{(\mu+\epsilon)\beta_{1}}{\mu+\rho}f'_{1}(S_{*})g_{1}(I_{*})+\frac{\epsilon\beta_{1}}{\mu+\delta}f_{1}(S_{*})g'_{1}(I_{*})\\&
 -\frac{(\mu+\epsilon)\beta_{2}}{\mu+\rho}f'_{2}(S_{*})g_{2}(E_{*})+\beta_{2}f_{2}(S_{*})g'_{2}(E_{*})\\&-(\mu+\epsilon)\\
 =&-\frac{(\mu+\epsilon)\beta_{1}}{\mu+\rho}f'_{1}(S_{*})g_{1}(I_{*})-\frac{(\mu+\epsilon)\beta_{2}}{\mu+\rho}f'_{2}(S_{*})g_{2}(E_{*})\\&
 +\frac{\epsilon\beta_{1}}{\mu+\delta}f_{1}(S_{*})g'_{1}(I_{*})-\frac{\epsilon\beta_{1}f_{1}(S_{*})g_{1}(I_{*})}{(\mu+\delta)I_{*}}\\&
 +\beta_{2}f_{2}(S_{*})g'_{2}(E_{*})-\frac{\beta_{2}f_{2}(S_{*})g_{2}(E_{*})}{E_{*}}\\
 =&-\frac{(\mu+\epsilon)\beta_{1}}{\mu+\rho}f'_{1}(S_{*})g_{1}(I_{*})-\frac{(\mu+\epsilon)\beta_{2}}{\mu+\rho}f'_{2}(S_{*})g_{2}(E_{*})\\&
 +\frac{\epsilon\beta_{1}}{\mu+\delta}f_{1}(S_{*})[g'_{1}(I_{*})-\frac{g_{1}(I_{*})}{I_{*}}]\\&
 +\beta_{2}f_{2}(S_{*})[g'_{2}(E_{*})-\frac{g_{2}(E_{*})}{E_{*}}].
   \end{split}
\end{equation}
By the hypothesis (iv), this implies $\varphi'(E_{*})<0$. If there is another endemic equilibrium point $M_{**}$, then $\varphi'(E_{**})\geq 0$ holds, which contradicts the previous discussion. Hence, system (2) has a unique endemic equilibrium point $M_{*}$ when $\mathnormal{R}_{0}>1$. Based on the above analysis, the following result can be obtained.
\begin{theorem}
System (2) has a unique disease-free equilibrium point $M_{0}$, if $\mathnormal{R}_{0}<1$. System (2) has a unique endemic equilibrium point $M_{*}$, if $\mathnormal{R}_{0}>1$.
\end{theorem}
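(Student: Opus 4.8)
The plan is to reduce the whole statement to counting the zeros of the scalar function $\varphi$ of (7) on the interval $[0,\frac{\Lambda}{\mu+\epsilon}]$. First I would dispose of the disease-free equilibrium directly: in any equilibrium of (2) with $E=0$, the third, fourth and fifth equations force $I=0$, then $Q=0$, then $R=0$, and the first and sixth equations then determine $S=S_{0}$ and $P=P_{0}$ uniquely, so $M_{0}$ is the only equilibrium with $E=0$ and it exists for every value of $\mathnormal{R}_{0}$. For an equilibrium with $E>0$, the relations $S=\frac{\Lambda-(\mu+\epsilon)E}{\mu+\rho}$, $I=\frac{\epsilon E}{\mu+\delta}$, $Q=\frac{\delta I}{\mu+\lambda+\kappa}$, $R=\frac{\lambda Q}{\mu}$ and $P=\frac{\rho S}{\mu}$ obtained from (5) show that such a point is completely determined by its $E$-coordinate, which must satisfy $\varphi(E)=0$; since any equilibrium lying in the invariant set $\Omega$ of Theorem 1 has $S\ge 0$, the relevant range is $E\in(0,\frac{\Lambda}{\mu+\epsilon}]$. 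Everything therefore comes down to counting the zeros of $\varphi$ in $(0,\frac{\Lambda}{\mu+\epsilon}]$.

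For existence when $\mathnormal{R}_{0}>1$, I would use the three facts $\varphi(0)=0$, $\varphi(\frac{\Lambda}{\mu+\epsilon})=-\Lambda<0$ and $\varphi'(0)=(\mu+\epsilon)(\mathnormal{R}_{0}-1)$; the last follows by evaluating the derivative formula (8) at $E=0$ with hypothesis (i), and it is strictly positive precisely when $\mathnormal{R}_{0}>1$. Then $\varphi$ is strictly positive just to the right of $0$ while it is negative at the right endpoint, and since $f_{1},f_{2},g_{1},g_{2}$ are $C^{1}$ the function $\varphi$ is continuous, so the intermediate value theorem yields a zero $E_{*}\in(0,\frac{\Lambda}{\mu+\epsilon})$; feeding $E_{*}$ back through the relations above produces an endemic equilibrium $M_{*}$, all of whose coordinates are positive by hypothesis (ii).

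Uniqueness, and with it the $\mathnormal{R}_{0}<1$ assertion, hinges on the sign of $\varphi'$ at a zero. The key manipulation is to start from the full expression (8) for $\varphi'(E_{*})$ and remove the stand-alone term $-(\mu+\epsilon)$ by substituting the equilibrium identity $\mu+\epsilon=\frac{\epsilon\beta_{1}f_{1}(S_{*})g_{1}(I_{*})}{(\mu+\delta)I_{*}}+\frac{\beta_{2}f_{2}(S_{*})g_{2}(E_{*})}{E_{*}}$, which comes from $(\mu+\epsilon)E_{*}=\beta_{1}f_{1}(S_{*})g_{1}(I_{*})+\beta_{2}f_{2}(S_{*})g_{2}(E_{*})$ together with $I_{*}=\frac{\epsilon E_{*}}{\mu+\delta}$; this converts $\varphi'(E_{*})$ into the four-term sum of (10), whose first two terms are strictly negative by hypotheses (ii) and (iii) and whose last two are $\le 0$ by hypothesis (iv), so $\varphi'(E_{*})<0$ at \emph{every} zero of $\varphi$ in $(0,\frac{\Lambda}{\mu+\epsilon}]$. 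A short monotonicity argument then finishes both cases. If $\mathnormal{R}_{0}>1$, let $E_{*}$ be the smallest zero; since $\varphi>0$ on $(0,E_{*})$ and $\varphi'(E_{*})<0$, $\varphi$ is negative just past $E_{*}$, so any further zero would be a point at which $\varphi$ is non-decreasing, i.e.\ a point with $\varphi'\ge 0$, contradicting the sign computation; hence $M_{*}$ is unique. If $\mathnormal{R}_{0}<1$, then $\varphi'(0)<0$ makes $\varphi$ negative just to the right of $0$, so the smallest hypothetical zero in $(0,\frac{\Lambda}{\mu+\epsilon}]$ would again have to satisfy $\varphi'\ge 0$, which is impossible, and therefore $\varphi$ has no zero in this interval and $M_{0}$ is the unique equilibrium of (2).

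The step I expect to be the real obstacle is the rearrangement of (8) into (10): one has to make the combinations $g'_{1}(I_{*})-\frac{g_{1}(I_{*})}{I_{*}}$ and $g'_{2}(E_{*})-\frac{g_{2}(E_{*})}{E_{*}}$ of hypothesis (iv) appear exactly, which works only because the precise ratio $\frac{I_{*}}{E_{*}}=\frac{\epsilon}{\mu+\delta}$ coming from the third equilibrium equation turns $\frac{g_{1}(I_{*})}{E_{*}}$ into $\frac{\epsilon}{\mu+\delta}\cdot\frac{g_{1}(I_{*})}{I_{*}}$; once that bookkeeping is correct, the sign conclusion is immediate from hypotheses (i)-(iv).
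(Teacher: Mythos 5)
Your proposal is correct and follows essentially the same route as the paper: reduce everything to counting zeros of $\varphi$ on $(0,\frac{\Lambda}{\mu+\epsilon}]$, obtain existence for $\mathnormal{R}_{0}>1$ from $\varphi(0)=0$, $\varphi'(0)=(\mu+\epsilon)(\mathnormal{R}_{0}-1)>0$ and $\varphi(\frac{\Lambda}{\mu+\epsilon})=-\Lambda<0$ via the intermediate value theorem, and obtain uniqueness by rewriting $\varphi'(E_{*})$ with the equilibrium identity so that hypothesis (iv) forces $\varphi'(E_{*})<0$ at every zero. You in fact supply two details the paper leaves implicit --- the non-existence of an endemic equilibrium when $\mathnormal{R}_{0}<1$ and the explicit sign argument ruling out a second zero --- but the underlying argument is identical.
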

\subsection{Stability Analysis}
\label{sec:7}
In this section, the local asymptotic stability of disease-free equilibrium point $M_{0}$ and endemic equilibrium point $M_{*}$ for system (2) is discussed.
\begin{theorem}
With regard to system (2), the disease-free equilibrium point $M_{0}$ is locally asymptotic stability, if $\mathnormal{R_{0}}<1$; the disease-free equilibrium point $M_{0}$ is unstable, if $\mathnormal{R_{0}}>1$.
\end{theorem}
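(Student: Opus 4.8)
\emph{Proof proposal.} The plan is to linearise system (2) at the disease-free equilibrium $M_{0}=(S_{0},0,0,0,0,P_{0})$ and apply the fractional stability criterion of Lemma 1. First I would compute the Jacobian $J(M_{0})$ of the right-hand side of (2). Hypotheses (i)--(ii) make this straightforward: since $g_{1}(0)=g_{2}(0)=0$, the terms $\beta_{i}f_{i}'(S)g_{i}(\cdot)$ vanish at $M_{0}$ in the $S$- and $E$-rows, and since each incidence term in the $D^{\alpha}E$ equation is a product that is first order in $(E,I)$ near $E=I=0$, the partial derivative $\partial_{S}(D^{\alpha}E)$ also vanishes at $M_{0}$. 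Ordering the coordinates as $(E,I,S,Q,R,P)$, the matrix $J(M_{0})$ is then block lower triangular, with top-left $2\times2$ block
\[
J_{1}=\begin{pmatrix}\beta_{2}f_{2}(S_{0})g_{2}'(0)-(\epsilon+\mu) & \beta_{1}f_{1}(S_{0})g_{1}'(0)\\ \epsilon & -(\delta+\mu)\end{pmatrix}
\]
governing the $(E,I)$ subsystem, and the remaining diagonal eigenvalues $-(\mu+\rho)$, $-(\lambda+\kappa+\mu)$, $-\mu$, $-\mu$ coming from the $S$-, $Q$-, $R$-, $P$-rows.

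Next I would note that those four eigenvalues are real and negative, so $|\arg(s_{j})|=\pi>\tfrac{\alpha\pi}{2}$ for every $\alpha\in(0,1)$; hence the stability of $M_{0}$ is decided entirely by the spectrum of $J_{1}$. A short computation gives $\operatorname{tr}(J_{1})=\beta_{2}f_{2}(S_{0})g_{2}'(0)-(\epsilon+\mu)-(\delta+\mu)$ and
\[
\det(J_{1})=(\mu+\epsilon)(\mu+\delta)\bigl(1-\mathnormal{R}_{0}\bigr),
\]
with $\mathnormal{R}_{0}$ exactly the basic reproduction number already derived. If $\mathnormal{R}_{0}<1$ then $\det(J_{1})>0$, and because $\tfrac{\beta_{2}}{\mu+\epsilon}f_{2}(S_{0})g_{2}'(0)$ is one of the two non-negative summands of $\mathnormal{R}_{0}$ it is $<1$, so $\operatorname{tr}(J_{1})<-(\delta+\mu)<0$; a real $2\times2$ matrix with negative trace and positive determinant has both eigenvalues in the open left half-plane, hence every eigenvalue of $J(M_{0})$ satisfies $|\arg(s_{j})|>\tfrac{\pi}{2}>\tfrac{\alpha\pi}{2}$ and Lemma 1(i) gives local asymptotic stability. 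If $\mathnormal{R}_{0}>1$ then $\det(J_{1})<0$, so $J_{1}$ has a positive real eigenvalue $s^{+}$ with $\arg(s^{+})=0<\tfrac{\alpha\pi}{2}$, whence $M_{0}$ is unstable.

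I do not expect a serious obstacle: the substantive steps are the Jacobian entries, the determinant identity, and the elementary $2\times2$ sign analysis. The two points that deserve a word of care are (a) the passage from the spectral criterion of Lemma 1, which as stated concerns the \emph{linear} system $D^{\alpha}X=AX$, to local (in)stability of the \emph{nonlinear} system (2), which I would justify via the standard linearisation theorem for Caputo fractional systems; and (b) the fact that $g_{1}'(0)$ and $g_{2}'(0)$ must exist and be non-negative — this is implicit in the very definition of $\mathnormal{R}_{0}$ and is what makes the sign statements about $\operatorname{tr}(J_{1})$ and about the summands of $\mathnormal{R}_{0}$ legitimate.
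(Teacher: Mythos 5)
Your proposal is correct and follows essentially the same route as the paper: linearise at $M_{0}$, peel off the four obvious negative eigenvalues $-\mu,-\mu,-(\mu+\rho),-(\mu+\lambda+\kappa)$, and decide stability from the remaining quadratic factor through the sign of its root sum and product, which are exactly the trace and determinant of your $2\times2$ block, with $s_{5}s_{6}=(\mu+\delta)(\mu+\epsilon)(1-\mathnormal{R}_{0})$. The only cosmetic difference is that the paper first checks the discriminant of that quadratic to conclude the two remaining roots are real, a step your trace--determinant argument renders unnecessary, since negative trace and positive determinant already place both eigenvalues (real or a conjugate pair) in the open left half-plane, giving $|\arg(s_{j})|>\frac{\pi}{2}>\frac{\alpha\pi}{2}$.
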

\begin{proof}
The Jacobian matrix of the system (2) at the disease-free equilibrium point $M_{0}$ is
\[
J_{M_{0}}=
 \begin{bmatrix}
   J_{11} & J_{12}\\
   J_{21} & J_{22}
  \end{bmatrix},
\]
where

\[J_{11}=
 \begin{bmatrix}
   -\mu-\rho & -\beta_{2}f_{2}(S_{0})g'_{2}(0) & -\beta_{1}f_{1}(S_{0})g'_{1}(0) \\
   0 & \beta_{2}f_{2}(S_{0})g'_{2}(0)-\mu-\epsilon & \beta_{1}f_{1}(S_{0})g'_{1}(0) \\
   0 & \epsilon & -\mu-\delta
  \end{bmatrix},\]
\[
J_{12}=
\begin{bmatrix}
   0 & 0 & 0 \\
   0 & 0 & 0 \\
   0 & 0 & 0
  \end{bmatrix},\;\;
J_{21}=
 \begin{bmatrix}
   0 & 0 & \delta \\
   0 & 0 & 0 \\
   \rho & 0 & 0
  \end{bmatrix},
\]
\[
J_{22}=
 \begin{bmatrix}
 -\mu-\kappa-\lambda & 0 & 0\\
   \lambda & -\mu & 0\\
   0 & 0 & -\mu
  \end{bmatrix}.
\]
The corresponding characteristic equation is
 \begin{equation}
\begin{split}
H(s)=&|sE-J_{M_{0}}|\\=&(s+\mu)^{2}(s+\mu+\rho)(s+\mu+\lambda+\kappa)H_{1}(s),
   \end{split}
\end{equation}
where
 \begin{equation}
\begin{split}
H_{1}(s)=&s^{2}+(\delta+2\mu+\epsilon-\beta_{2}f_{2}(S_{0})g'_{2}(0))s\\&+(\mu+\delta)(\mu+\epsilon)-\beta_{2}(\mu+\delta)f_{2}(S_{0})g'_{2}(0)\\&-\epsilon\beta_{1}f_{1}(S_{0})g'_{1}(0).
   \end{split}
\end{equation}
The characteristic equation $H(s)=0$ has four obvious negative characteristic roots, which are denoted by $s_{1}=s_{2}=-\mu$, $s_{3}=-\mu-\rho$ and $s_{4}=-\mu-\lambda-\kappa$, respectively. The discriminant of $H_{1}(s)$ in quadratic form is
\[
\begin{split}
\Delta=&[\delta+2\mu+\epsilon-\beta_{2}f_{2}(S_{0})g'_{2}(0)]^2-4[(\mu+\delta)(\mu+\epsilon)\\&-\beta_{2}(\mu+\delta)f_{2}(S_{0})g'_{2}(0)-\epsilon\beta_{1}f_{1}(S_{0})g'_{1}(0)]\\
=&(\delta-\epsilon+\beta_{2}f_{2}(S_{0})g'_{2}(0))^{2}+4\epsilon\beta_{1}f_{1}(S_{0})g'_{1}(0)\\>&0.
   \end{split}
\]
This implies that the other two eigenvalues $s_{5}$ and $s_{6}$ of characteristic equation $H(s)=0$ are real roots. Hence
\[s_{5}+s_{6}=-(\delta+2\mu+\epsilon-\beta_{2}f_{2}(S_{0})g'_{2}(0)),\]
\[s_{5}s_{6}=(\mu+\delta)(\mu+\epsilon)(1-\mathnormal{R}_{0}).\]
If $\mathnormal{R}_{0}<1$, then one can obtain $s_{5}+s_{6}<0$ and $s_{5}s_{6}>0$, which imply $s_{5}<0$ and $s_{6}<0$. If $\mathnormal{R}_{0}>1$, one has $s_{5}s_{6}<0$, which imply $s_{5}>0$ or $s_{6}>0$. It follows from Lemma 1 that the proof is completed.
\end{proof}

Further, we will show the locally asymptotic stability of the endemic equilibrium point $M_{*}$ of system (2). Similarly, the corresponding Jacobian matrix of the system (2) at $M_{*}$ is
\[
J_{M_{*}}=
 \begin{bmatrix}
   J_{1} & J_{2}\\
   J_{3} & J_{4}
  \end{bmatrix},
\]
where
\[
J_{1}=
 \begin{bmatrix}
   -l_{3}-l_{4}-\mu-\rho & -l_{2} & -l_{1} \\
   l_{3}+l_{4} & l_{2}-\epsilon-\mu & l_{1}\\
   0 & \epsilon & -\delta-\mu
  \end{bmatrix},
\]
\[
J_{2}=
 \begin{bmatrix}
    0 & 0 & 0\\
   0 & 0 & 0\\
    0 & 0 & 0\\
  \end{bmatrix},\;\;
J_{3}=
 \begin{bmatrix}
   0 & 0 & \delta \\
   0 & 0 & 0 \\
   \rho & 0 & 0 \\
  \end{bmatrix},
\]
\[
J_{4}=
 \begin{bmatrix}
  -\mu-\lambda-\kappa & 0 & 0\\
  \lambda & -\mu & 0\\
  0 & 0 & -\mu\\
  \end{bmatrix},
\]
with
\[l_{1}=\beta_{1}f_{1}(S_{*})g'_{1}(I_{*}),\;\;l_{2}=\beta_{2}f_{2}(S_{*})g'_{2}(E_{*}),\]
\[l_{3}=\beta_{1}f'_{1}(S_{*})g_{1}(I_{*}),\;\;l_{4}=\beta_{2}f'_{2}(S_{*})g_{2}(E_{*}).\]
Hence, the corresponding characteristic equation is
 \begin{equation}
\begin{split}
L(s)=&|sE-J_{M_{*}}|\\=&(s+\mu)^{2}(s+\mu+\lambda+\kappa)L_{1}(s),
   \end{split}
\end{equation}
where
\[L_{1}(s)=s^{3}+a_{1}s^{2}+a_{2}s+a_{3},\]
with
\[a_{1}=\epsilon+3\mu+\delta+\rho-l_{2}+l_{3}+l_{4},\]
\[\begin{split}
a_{2}=&(\mu+\epsilon-\l_{2})(2\mu+\delta+\rho)+(\mu+\delta)(\mu+\rho)\\&+(2\mu+\delta+\epsilon)(l_{3}+l_{4})-\epsilon l_{1},
\end{split}\]
\[\begin{split}
a_{3}=&(\mu+\delta)(\mu+\rho)(\mu+\epsilon-\l_{2})+(\mu+\delta)(\mu+\epsilon)(l_{3}+l_{4})\\&-\epsilon(\mu+\rho)l_{1}.
\end{split}\]
By the corresponding results in \cite{13,14}, let
\begin{equation}
\begin{split}
\mathnormal{D}_{1}(L_{1}(s))=&
{\left| \begin{array}{ccccc}
1 & a_{1} & a_{2} & a_{3} & 0\\
0 & 1 & a_{1} & a_{2} & a_{3}\\
3 & 2la_{1} & a_{2} & 0 & 0\\
0 & 3 & 2a_{1} & a_{2} & 0\\
0 & 0 & 3 & 2a_{1} & a_{2}
\end{array} \right|}\\=&18a_{1}a_{2}a_{3}+a_{1}^{2}a_{2}^{2}-4a_{1}^{3}a_{3}-4a_{2}^{3}-27a_{3}^{2}.
\end{split}
\end{equation}
Then, the following result can be obtained.
\begin{theorem}
With regard to system (2), assume that $\mathnormal{R}_{0}>1$,
(i) If $\mathnormal{D}_{1}(L_{1}(s))>0$, $a_{1}>0$, $a_{3}>0$ and $a_{1}a_{2}>a_{3}$, then the endemic equilibrium point $M_{*}$ is locally asymptotically stable.\\
(ii) If $\mathnormal{D}_{1}(L_{1}(s))<0$, $a_{1}>0$, $a_{2}>0$ and $a_{3}>0$, then the endemic equilibrium point $M_{*}$ is locally asymptotically stable for $\alpha\in(0,\frac{2}{3})$.\\
(iii) If $\mathnormal{D}_{1}(L_{1}(s))<0$, $a_{1}<0$ and $a_{2}<0$, then the endemic equilibrium point $M_{*}$ is unstable for $\alpha\in(\frac{2}{3},1)$.\\
(iv) If $\mathnormal{D}_{1}(L_{1}(s))<0$, $a_{1}>0$, $a_{2}>0$ and $a_{1}a_{2}=a_{3}$, then for $\alpha\in(0,1)$, the endemic equilibrium point $M_{*}$ is locally asymptotically stable.
\end{theorem}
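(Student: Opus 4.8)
\noindent The plan is to use the factorization of the characteristic polynomial already obtained above in order to reduce the stability question for the six–dimensional Jacobian $J_{M_{*}}$ to the location of the roots of the cubic factor $L_{1}(s)=s^{3}+a_{1}s^{2}+a_{2}s+a_{3}$, and then to invoke the fractional Routh--Hurwitz criterion of \cite{13,14} once for each of the four sign regimes in the statement, each time closing the argument with Lemma 1(i).

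\emph{Reduction and discriminant dichotomy.} From $L(s)=(s+\mu)^{2}(s+\mu+\lambda+\kappa)L_{1}(s)$, three of the six eigenvalues of $J_{M_{*}}$ are $-\mu$ (algebraic multiplicity two) and $-(\mu+\lambda+\kappa)$; since $\mu>0$ and $\lambda+\kappa>0$, these are strictly negative reals with argument $\pi>\frac{\alpha\pi}{2}$ for every $\alpha\in(0,1)$, so by Lemma 1(i) they can never be responsible for loss of asymptotic stability. Thus everything reduces to whether the three roots of $L_{1}(s)=0$ obey $|\arg(s)|>\frac{\alpha\pi}{2}$. Here the sign of $\mathnormal{D}_{1}(L_{1}(s))$ classifies the roots in the usual way: $\mathnormal{D}_{1}(L_{1}(s))>0$ gives three distinct real roots, while $\mathnormal{D}_{1}(L_{1}(s))<0$ gives one real root together with a complex–conjugate pair $r e^{\pm i\theta}$ with $r>0$ and $\theta\in(0,\pi)$; and Vieta's relation makes the product of the three roots equal to $-a_{3}$, so whenever $a_{3}>0$ the (or a) real root is negative and hence harmless.

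\emph{The four cases.} In case (i), $\mathnormal{D}_{1}(L_{1}(s))>0$ makes all three roots real, and $a_{1}>0$, $a_{3}>0$, $a_{1}a_{2}>a_{3}$ are the classical Routh--Hurwitz conditions, which force all three roots to be negative; their arguments equal $\pi$, so Lemma 1(i) yields asymptotic stability for all $\alpha\in(0,1)$. In case (iv), $a_{1}a_{2}=a_{3}$ lets $L_{1}$ factor as $(s+a_{1})(s^{2}+a_{2})$, whence the conjugate pair is purely imaginary, $\pm i\sqrt{a_{2}}$, with $|\arg|=\frac{\pi}{2}>\frac{\alpha\pi}{2}$ for every $\alpha\in(0,1)$, and $M_{*}$ is again asymptotically stable throughout. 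Cases (ii) and (iii) are where the fractional order genuinely enters: in both, $\mathnormal{D}_{1}(L_{1}(s))<0$, so there is a real root and a conjugate pair $r e^{\pm i\theta}$; in (ii) the real root is negative by the remark above (as $a_{3}>0$), while in (iii) its location is immaterial because instability needs only one offending eigenvalue. The criterion of \cite{13,14} localizes the conjugate pair relative to the ray $|\arg|=\frac{\pi}{3}$: under $a_{1},a_{2},a_{3}>0$ one gets $\theta>\frac{\pi}{3}$, hence $|\arg(s)|>\frac{\alpha\pi}{2}$ precisely when $\frac{\alpha\pi}{2}<\frac{\pi}{3}$, i.e.\ $\alpha<\frac{2}{3}$, proving (ii); under $a_{1}<0$, $a_{2}<0$ one gets $\theta<\frac{\pi}{3}$, so for $\alpha>\frac{2}{3}$ the conjugate pair satisfies $|\arg|<\frac{\alpha\pi}{2}$, an eigenvalue of $J_{M_{*}}$ fails Lemma 1(i), and $M_{*}$ is unstable, proving (iii). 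Assembling the four cases completes the proof.

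\emph{Main obstacle.} Conceptually there is little difficulty: the heavy lifting is done by the cited fractional Routh--Hurwitz theorem, and the remainder is discriminant/Vieta bookkeeping plus peeling off the three benign eigenvalues. The one genuinely delicate point is aligning the hypotheses with \cite{13,14} — checking that their result is stated for the monic cubic with exactly the sign patterns on $(a_{1},a_{2},a_{3})$ used here, and that the threshold $\alpha=\frac{2}{3}$ is precisely the one produced by comparing the argument of the complex pair with $\frac{\alpha\pi}{2}$. Pinning down that correspondence, rather than carrying out any new computation, is the crux.
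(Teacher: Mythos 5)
Your proposal is correct and follows essentially the same route as the paper: factor out the three negative real eigenvalues $-\mu,-\mu,-(\mu+\lambda+\kappa)$, classify the roots of the cubic $L_{1}(s)$ by the sign of $\mathnormal{D}_{1}$, apply the Routh--Hurwitz conditions in the all-real case, and in the complex-pair case compare the argument $\theta$ of $m\pm ni$ (shown to exceed or fall below $\tfrac{\pi}{3}$ via the Vieta relations, which the paper writes out explicitly where you defer to \cite{13,14}) against $\tfrac{\alpha\pi}{2}$ to get the threshold $\alpha=\tfrac{2}{3}$. Your explicit handling of cases (iii) and (iv) --- in particular the factorization $L_{1}(s)=(s+a_{1})(s^{2}+a_{2})$ when $a_{1}a_{2}=a_{3}$, giving the purely imaginary pair with $|\arg|=\tfrac{\pi}{2}>\tfrac{\alpha\pi}{2}$ --- fills in details the paper omits as ``similar,'' but is the same argument.
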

\begin{proof}
Based on the previous discussion, the characteristic equation $L(s)=0$ has three obvious negative roots $s_{1}=s_{2}=-\mu$ and $s_{3}=-\mu-\lambda-\kappa$. In order to investigate the stability of equilibrium point $M_{*}$, we only need to discuss the range of the root of $L_{1}(s) = 0$, denoted by $s_{4}$, $s_{5}$ and $s_{6}$.

(i) By the results in \cite{14}, if $\mathnormal{D}_{1}(L_{1}(s))>0$, then $s_{4}$, $s_{5}$ and $s_{6}$ are real roots. Further, by Routh-Hurwitz criterion, the necessary and sufficient conditions for $s_{i}(i=4,5,6)$ to lie in the left half plane are
\[a_{1}>0,\;\;a_{3}>0,\;\;a_{1}a_{2}>a_{3}.\]
That is to say, under the above conditions, the roots of $L_{1}(s)=0$ satisfy
\[|\arg (s_{i})|>\frac{\pi}{2}>\frac{\alpha\pi}{2}\;\;(i=4,5,6).\]
Therefore, $M_{*}$ is locally asymptotically stable and (i) holds

(ii)  By the results in \cite{14}, if $\mathnormal{D}_{1}(L_{1}(s))<0$, then $L_{1}(s)=0$ has a real root and a pair of conjugate complex roots, denoted by $s_{4}$, $m+ni$ and $m-ni$, respectively. Thus, one has
\[
\begin{split}
L_{1}(s)=&s^{3}+a_{1}s^{2}+a_{2}s+a_{3}\\=&(s-s_{4})(s-m-ni)(s-m+ni).
\end{split}
\]
By calculation,
 \[a_{1}=-s_{4}-2m,\;\;a_{2}=2s_{4}m+m^{2}+n^{2},\;\;a_{3}=-s_{4}(m^{2}+n^{2}).\]
 The conditions $a_{1}>0$, $a_{2}>0$ and $a_{3}>0$ imply that
\[-s_{1}>2m,\;\;m^{2}+n^{2}>-2s_{4}m,\;\;s_{4}<0.\]
Further, one has
\[m^{2}+m^{2}\tan^{2}\theta>-2s_{4}m>4m^{2}.\]
That is to say $\tan^{2}\theta>3$, which implies
\[\theta=|\arg(s)|>\frac{\pi}{3}.\]
Therefore, in order to ensure the establishment of $|\arg(s)|>\frac{\alpha\pi}{2}$, we must have $\alpha<\frac{2}{3}$. Thus (ii) holds.

The proof of conclusions (iii) and (iv) is similar to that of conclusion (ii), hence we omit it.
\end{proof}
\section{Numerical simulations}
\label{sec:8}
\subsection{Data Sources}
\label{sec:9}
The data used in this paper are from the Johns Hopkins University Center for Systems Science and Engineering (https://github.com/CSSEGISandData/COVID-19). the Johns Hopkins University publishes data of accumulated and newly confirmed cases, recovered cases and death cases of COVID-19 from January 22, 2020.
\subsection{Analysis of the SEIQRP Model}
\label{sec:10}
In the following discussion, the standard incidence rate \cite{18} is used to describe the transmission of COVID-19, and is given by
\[\beta_{1}f_{1}(S)g_{1}(I)=\frac{\beta_{1}SI}{N},\;\;\beta_{2}f_{2}(S)g_{2}(E)=\frac{\beta_{2}SE}{N},\]
where $N$ represent the total population of the region at the initial time.
Hence,
 \begin{eqnarray}
 \left\{\begin{split}
D^{\alpha}S(t)=&\Lambda-\frac{\beta_{1}S(t)I(t)}{N}-\frac{\beta_{2}S(t)E(t)}{N}-\mu S(t)-\rho S(t),\\
D^{\alpha}E(t)=&\frac{\beta_{1}S(t)I(t)}{N}+\frac{\beta_{2}S(t)E(t)}{N}-\epsilon E(t)-\mu E(t),\\
D^{\alpha}I(t)=&\epsilon E(t)-(\delta+\mu)I(t),\\
D^{\alpha}Q(t)=&\delta I(t)-(\lambda+\kappa+\mu)Q(t),\\
D^{\alpha}R(t)=&\lambda Q(t)-\mu R(t),\\
D^{\alpha}P(t)=&\rho S(t)-\mu P(t).
  \end{split}
 \right.
\end{eqnarray}
The effectiveness of the model (15) in describing the spread of COVID-19 is illustrated by selecting the confirmed, cured and dead cases in the United States. According to the real data provided by the Johns Hopkins University, the outbreak in the United States has not been brought under full control. The data of confirmed cases, cured cases and death cases are selected from January 22, 2020, to April 20, 2020. Assuming that the confirmed individuals will be isolated, then
\begin{equation}
Isolated=Confirmed-Recovered-Death.
\end{equation}
This hypothesis is in line with the actual situation. Hence, we can obtain the real data of isolated cases. Through the fractional predictor corrector method and the least squares fitting\cite{24}, we can identify the parameters of the model (15) through the real data, which can be found in Table 1.

\renewcommand\tablename{Table}
   \begin {table}[htbp]
    	\centering \caption{Summary table of the parameter identification for model (15) after using least squares fitting to real data from January 22, 2020, to April 20, 2020}
\begin {tabular}{c c}
	  \toprule
	   Notation & Parameter Identification \\ \midrule
      $\alpha$ &  0.7182 \\
	   $\Lambda$ &  3481608 \\
	   $\beta_{1}$ &  0.3813 \\
       $\beta_{2}$ &  0.7065 \\
        $\mu$ &  $6.7063\times10^{-8}$ \\
       $\rho$  &  0.1927 \\
       $\epsilon$  &  0.2657 \\
        $\delta$  &  0.3352 \\
       $\lambda$  &  0.0149 \\
       $\kappa$  &  $1.1975\times10^{-6}$ \\
	  \bottomrule							
\end{tabular}
\end{table}
Based on the parameters in Table 1, we can make a simple prediction of isolated cases and recovered cases in the United States, which can be found in Figure 1.  We need to emphasize that the peak here represents the number of isolated cases rather than the cumulative number of confirmed cases.
\begin{figure}
    \begin{center}
        \includegraphics[width=1\linewidth]{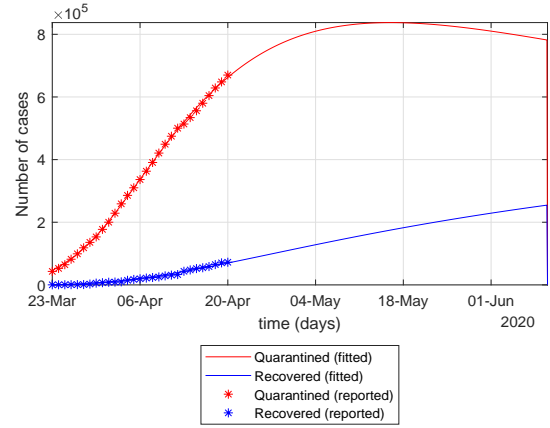}
       \caption{Number of isolated cases predicted and recovered cases predicted by the model (15) for the United States.}
    \end{center}
    \end{figure}
Under the parameters in Table 1, it can be calculated by using (9) that $\mathnormal{R}_{0}=0.2041<1$. By the conclusion in Theorem 3, the disease-free equilibrium point is local asymptotic stable. The SEIQRP model has a basic guiding significance for predicting and fitting spreading dynamics of COVID-19. However, the prediction of this model for COVID-19 is relatively rough, we still need to improve model (15) according to actual mitigation policies and research objectives. According to the analysis in Section 2, we choose the SEIQRPD model to predict the trend of the epidemic in the United States under reasonable assumptions.
\subsection{The SEIQRPD Model for the Prediction of COVID-19}
\label{sec:11}
Similarly, the standard incidence rate is used to describe the transmission of COVID-19, the fractional SEIQRPD model can be expressed as
 \begin{eqnarray}
 \left\{\begin{split}
D^{\alpha}S(t)=&-\frac{\beta_{1}(t)S(t)I(t)}{N}-\frac{\beta_{2}S(t)E(t)}{N}-\rho S(t),\\
D^{\alpha}E(t)=&\frac{\beta_{1}(t)S(t)I(t)}{N}+\frac{\beta_{2}S(t)E(t)}{N}-\epsilon E(t),\\
D^{\alpha}I(t)=&\epsilon E(t)-\delta I(t),\\
D^{\alpha}Q(t)=&\delta I(t)-(\lambda(t)+\kappa(t))Q(t),\\
D^{\alpha}R(t)=&\lambda(t) Q(t),\\
D^{\alpha}P(t)=&\rho S(t),\\
D^{\alpha}D(t)=&\kappa(t) Q(t).
  \end{split}
 \right.
\end{eqnarray}

When $\alpha=1$, the fractional-order SEIQRPD model is similar to the integer-order model used in \cite{24}. According to the data provided by Johns Hopkins University, by April 20, 2020, the outbreak in China has been basically controlled. In many provinces of China, the number of new cases per day is increasing in single digits. This means that the data in China contains more information about the spreading dynamics of COVID-19. Therefore, the data in Hubei, Guangdong, Hunan and Zhejiang are selected to research the fitting effect of the model (17). According to the real data of these four regions in China, the parameters of the model (17) are identified, and the results are shown in Table 2. The model (17) successfully capture the trend of the outbreak, which can be seen in Figure 3.
\begin{figure}[H]
    \begin{center}
        \includegraphics[width=1\linewidth]{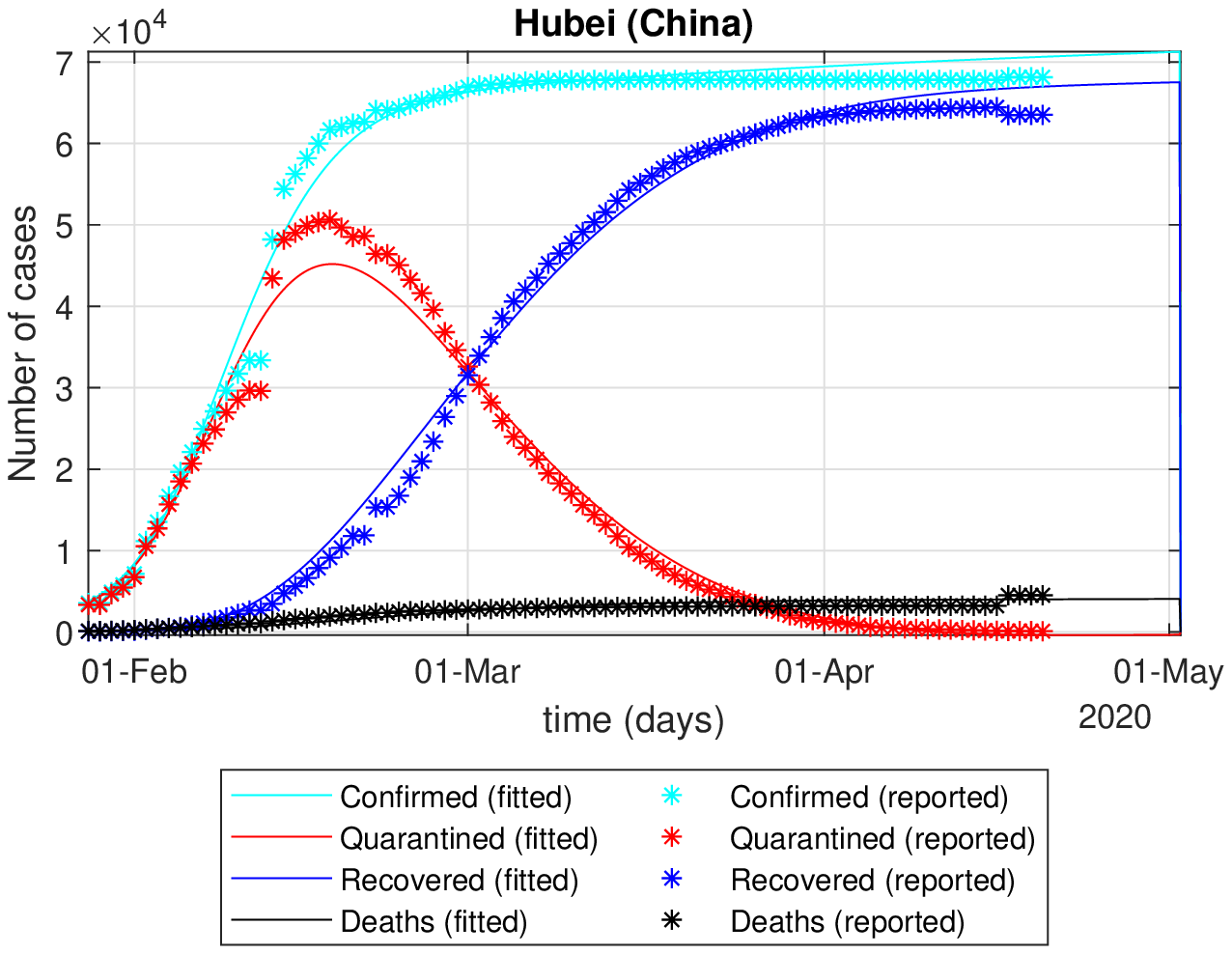}

    \end{center}
\vfill
    \begin{center}
        \includegraphics[width=1\linewidth]{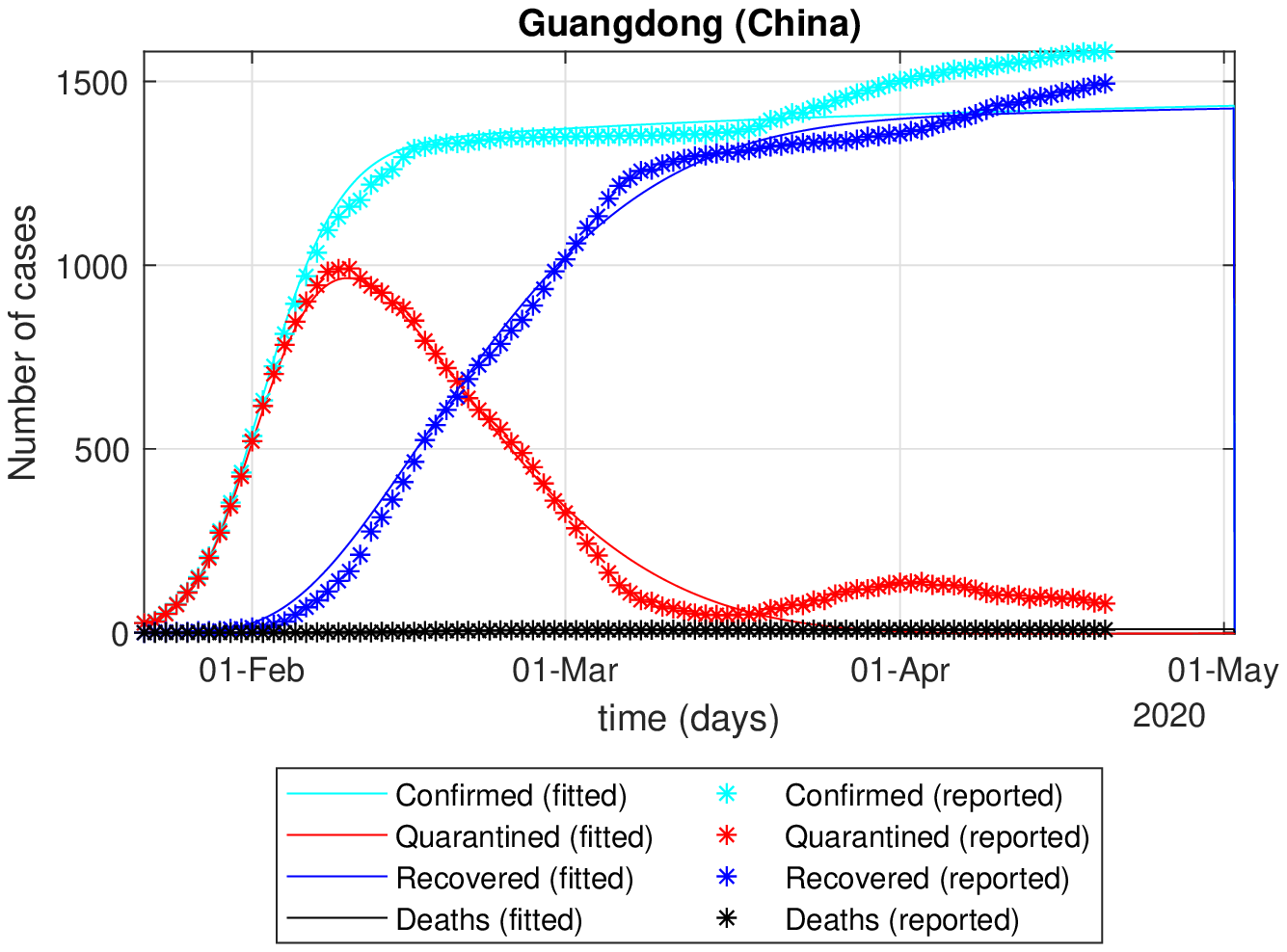}

    \end{center}
\end{figure}
\begin{figure}[H]
    \begin{center}
        \includegraphics[width=1\linewidth]{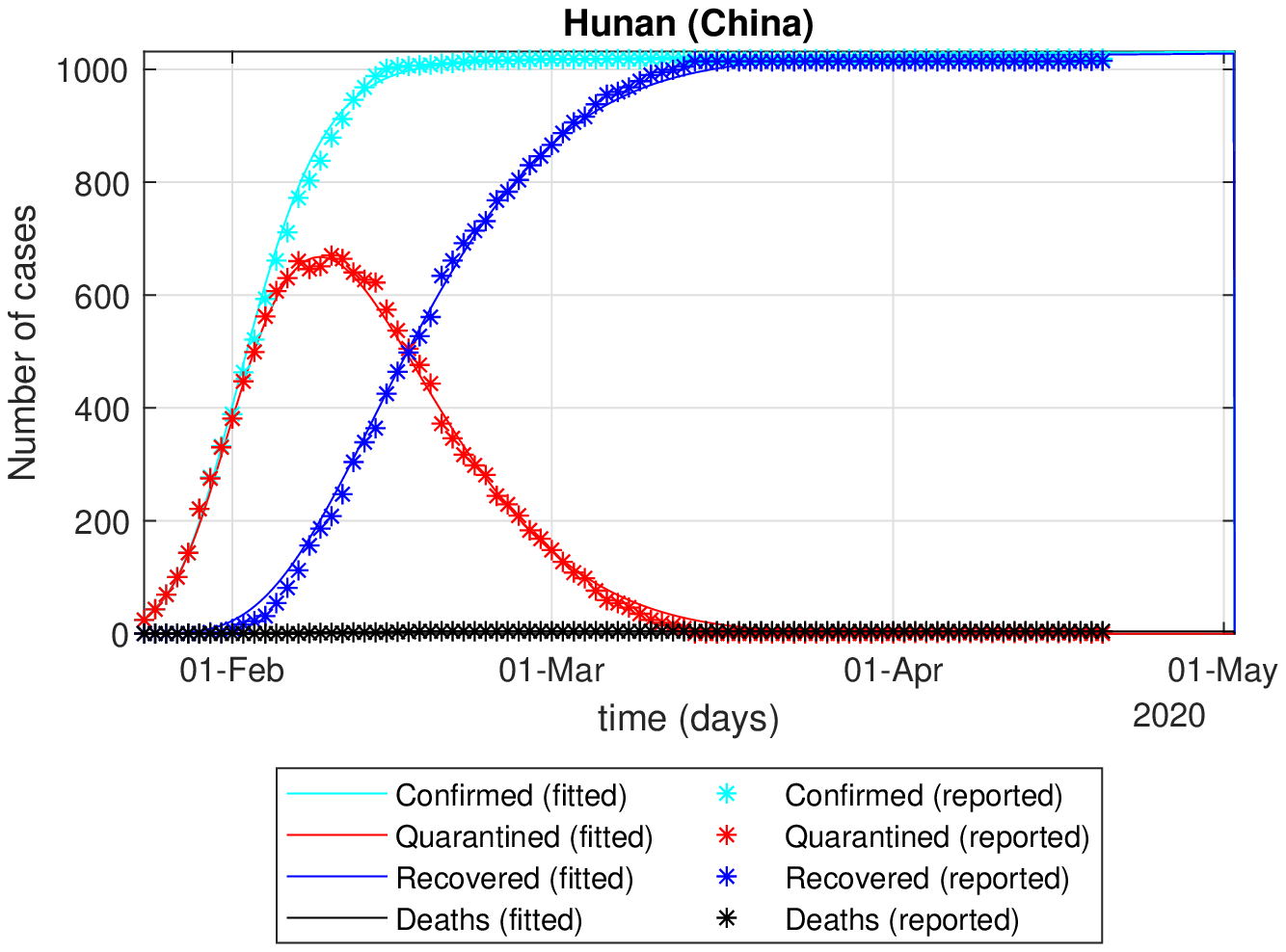}
    \end{center}
\vfill
    \begin{center}
        \includegraphics[width=1\linewidth]{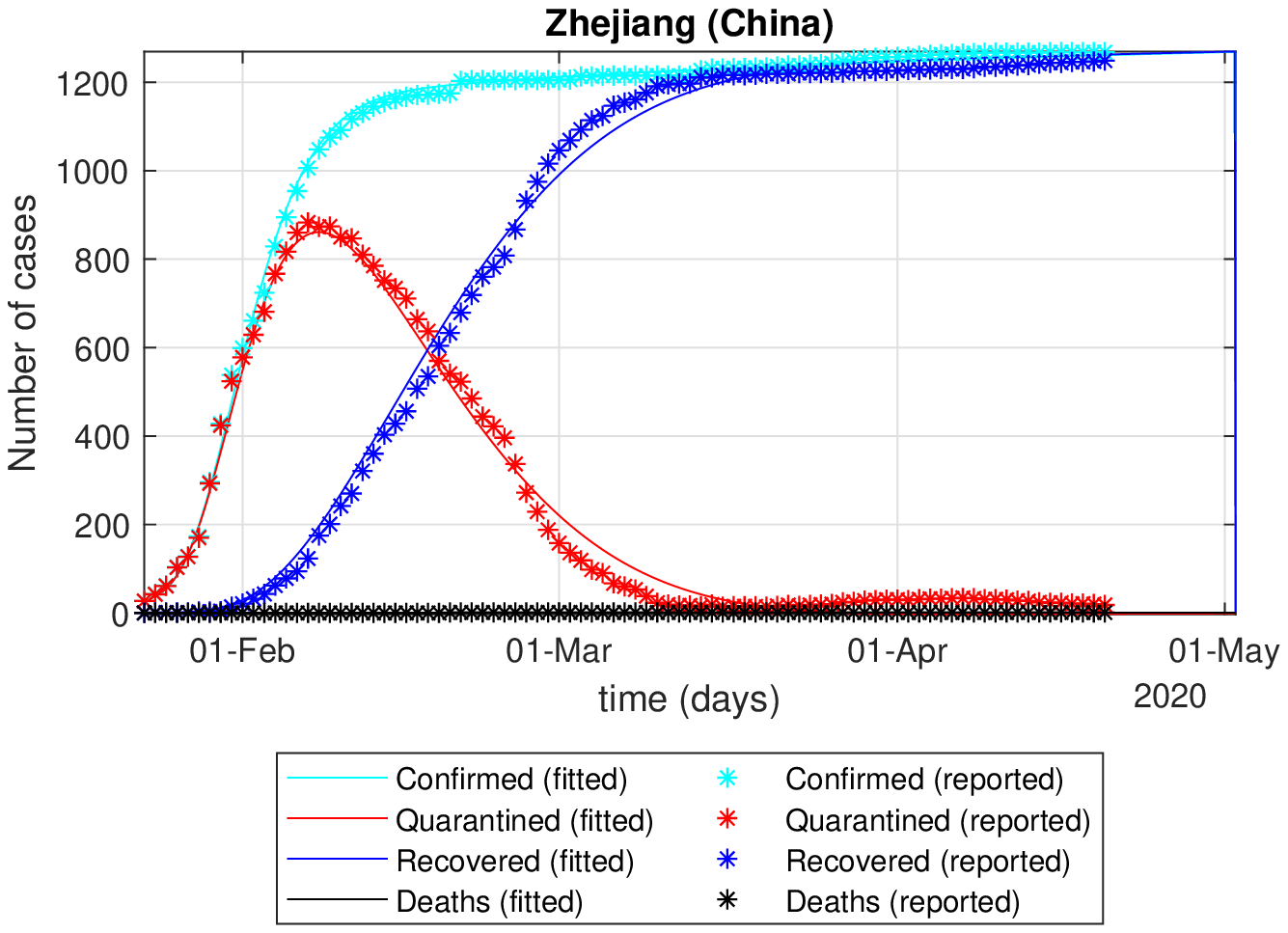}
       \caption{The fitting effect of the improved model (17) on the outbreak in Hubei, Guangdong, Hunan and Zhejiang.}
    \end{center}
    \end{figure}
 \renewcommand\tablename{Table}
   \begin {table}[H]
    	\centering \caption{Summary of parameter identification of model (17) (data used from January 22, 2020, to April 20, 2020).}
\begin {tabular}{c c c c c}
	  \toprule
	   Notation & Hubei & Guangdong& Hunan& Zhejiang\\ \midrule
      $\alpha$ &  1.0729 & 1.0446& 1.0135& 1.0482\\
	   $\sigma_{1}$ &  0.3954 &0.7057 & 1.4004&1.9985\\
	   $\sigma_{2}$ &  0.1345 &0.0783 & 0.4752&0.4592\\
       $\beta_{2}$ &  0.4132 &0.6594 & 0.6908&0.7608\\
       $\rho$  &  0.0913 &0.1033 & 0.0702& 0.1611\\
       $\epsilon$  &  0.2595 &0.4095 & 0.4264&0.3457\\
        $\delta$  &  0.2283 &0.3796 & 0.9589&0.3665\\
       $\lambda_{1}$  &  0.6509 &0.9695 & 0.9989&0.9142\\
       $\lambda_{2}$  &  0.002 &0.0019 & 0.0032&0.0025\\
       $\kappa_{1}$  &  0.002 & $2.8571\times10^{-4}$ & 0.0012&0.0098\\
       $\kappa_{2}$  & $6.3088\times10^{-6}$ & $2.7547\times10^{-14}$ & 0.1017&1.0968\\
	  \bottomrule							
\end{tabular}
\end{table}

At present, we can obtain the epidemic data of the United States from January 22, 2020 to April 20, 2020. We need to preprocess the data to remove the data smaller than 0.5\% of the current highest number of confirmed cases. In order to test the prediction ability of the SEIQRPD model (17) for the development process of the epidemic in the United States, we select the data before April 5 to identify the parameters of the SEIQRPD model (17).  Furthermore, in order to illustrate the ability of the SEIQRPD model (17) in predicting the outbreak, we compared the real data and fitted data after April 5, which can be seen in Table 3, Table 4 and Figure 4. According to the results in Table 3 and Table 4, it can be found that the real values of current isolated cases and cumulative confirmed cases fall within the range of 95\% - 105\% of the predicted values. This shows that the SEIQRPD model (17) can effectively predict the data in the next two weeks.

 \renewcommand\tablename{Table}
   \begin {table}[htbp]
    	\centering \caption{Summary of real and fitted data for the cumulative confirmed cases in the United States from April 6, 2020 to April 20, 2020.}
\begin {tabular}{c c c c c}
	  \toprule
\multirow{2}{*}{Data}&\multicolumn{2}{c}{Cumulative confirmed cases}&\multirow{2}{*}{Relative error (\%)}\\
\cmidrule(lr){2-3}
	   &Reported & Predicted \\ \midrule
        6, April & 366667 & 370220 & 0.97\\
	    7, April & 397505 & 403441 & 1.49\\
	    8, April & 429052 & 436940 & 1.84\\
        9, April & 462780 & 470492 & 1.67\\
        10, April & 496535 & 503882 & 1.48\\
        11, April & 526396 & 536905 & 2\\
        12, April & 555313 & 569374 & 2.53\\
        13, April & 580619 & 601117 & 3.53\\
        14, April & 607670 & 631983 & 4\\
        15, April & 636350 & 661839 & 4.01\\
        16, April & 667592 & 690572 & 3.44\\
        17, April & 699706 & 718090 & 2.63\\
        18, April & 732197 & 744318 & 1.66\\
        19, April & 759086 & 769200 & 1.33\\
        20, April & 784326 & 792699 & 1.07\\
	  \bottomrule							
\end{tabular}
\end{table}
 \renewcommand\tablename{Table}
   \begin {table}[htbp]
    	\centering \caption{Summary of real and fitted data for the isolated cases in the United States from April 6, 2020 to April 20, 2020.}
\begin {tabular}{c c c c c}
	  \toprule
\multirow{2}{*}{Data}&\multicolumn{2}{c}{Isolated cases}&\multirow{2}{*}{Relative error (\%)}\\
\cmidrule(lr){2-3}
	   &Reported & Predicted \\ \midrule
        6, April &  336303 &  339897 & 1.07\\
	    7, April &  362948 &  368221& 1.45\\
	    8, April &   390798&  396348& 1.42\\
        9, April &   420826&  424049& 0.77\\
        10, April &  449159&  451108& 0.43\\
        11, April &  474664&  477327& 0.56\\
        12, April &  500306&  502526& 0.44\\
        13, April &  513609&  526545& 2.52\\
        14, April &  534076&  549250& 2.84\\
        15, April &  555929&  570525& 2.63\\
        16, April &  579973 & 590279& 1.78\\
        17, April &  604388&  608442& 0.67\\
        18, April &  628693&  624968& 0.59\\
        19, April &  648088&  639826& 1.27\\
        20, April &  669903&  653007& 2.52\\
	  \bottomrule							
\end{tabular}
\end{table}
        \begin{figure}
    \begin{center}
        \includegraphics[width=1\linewidth]{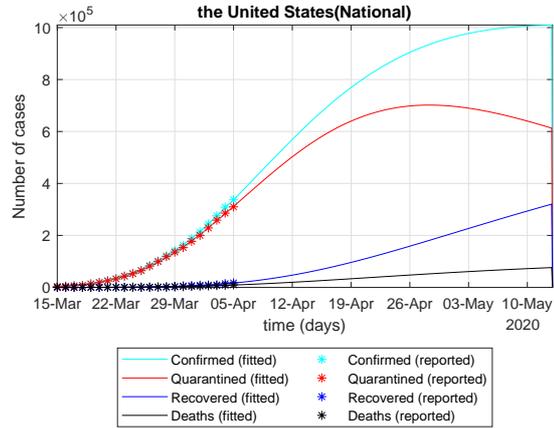}
       \caption{Based on the data of the United States from January 22 to April 20, 2020 to verify the accuracy of the forecast for the next 15 days.}
    \end{center}
    \end{figure}
The data before April 20, 2020 are selected to identify the parameters of the improved model (17), and the results are shown in Table 5. The cumulative number of confirmed cases and the number of isolated cases after two weeks are predicted, which can be seen in Table 6.
        \begin{figure}
    \begin{center}
        \includegraphics[width=1\linewidth]{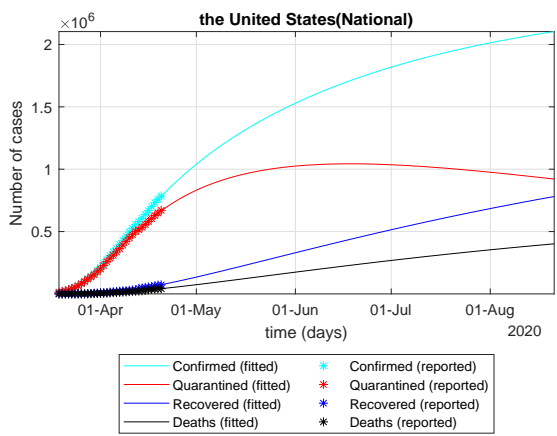}
       \caption{Based on the data of the United States from January 22 to April 20, 2020, to verify the accuracy of the forecast for the next 15 days.}
    \end{center}
    \end{figure}
    \renewcommand\tablename{Table}
   \begin {table}[H]
    	\centering \caption{Summary table of the parameter identification for model (17) (data used from January 22, 2020, to April 20, 2020)}
\begin {tabular}{c c}
	  \toprule
	   Notation & Parameter Identification \\ \midrule
       $\alpha$ &  0.8477\\
	   $\sigma_{1}$ & 0.8601  \\
	   $\sigma_{2}$ &  0.1553\\
       $\beta_{2}$ &  0.3659\\
       $\rho$  & $4.3455\times10^{-6}$\\
       $\epsilon$  &  0.3803\\
        $\delta$  &  0.4024\\
       $\lambda_{1}$  &  0.0121\\
       $\lambda_{2}$  &  0.1321\\
       $\kappa_{1}$  &  0.0064\\
       $\kappa_{2}$  & $3.7372\times10^{-4}$\\
	  \bottomrule							
\end{tabular}
\end{table}
 \renewcommand\tablename{Table}
   \begin {table}[H]
    	\centering \caption{Summary of predicted data for the United States from April 21, 2020 to May 5, 2020.}
\begin {tabular}{c c c}
	  \toprule
\multirow{2}{*}{Data}&\multicolumn{1}{c}{Cumulative confirmed cases}&\multicolumn{1}{c}{Isolated cases}\\
	   &Predicted &  Predicted \\ \midrule
        21, April & 807947 &  684148 \\
	    22, April & 833465 &  701725\\
	    23, April & 858444 &  718615\\
        24, April & 882890 &  734836\\
        25, April & 906816 &  750409\\
        26, April & 930232 &  765352\\
        27, April & 953150 &  779686\\
        28, April & 975584 &  793432\\
        29, April & 997546 &  806610\\
        30, April & 1019050 &  819239\\
        1, May & 1040108 &  831340\\
        2, May & 1060734 &  842931\\
        3, May & 1080941 &  854031\\
        4, May & 1100740 &  864658\\
        5, May & 1120145 &  874829\\
	  \bottomrule							
\end{tabular}
\end{table}
The isolated cases in the United States will peak on June 18, 2020, with the peak of $1.0431\times10^{6}$, which can be seen in Figure 5.
\section{Conclusion}
\label{sec:12}
We first propose the fractional SEIQRP model with generalized incidence rates. Some qualitative properties of the SEIQRP model are discussed. In order to predict COVID-19 effectively, we propose an improved SEIQRPD model according to the actual mitigation situation. According to the data of the United States before April 5, 2020, the trend of the outbreak in the United States from April 6 to April 20 is successfully predicted as compared to the real records. Then, using the data before April 20, 2020, we forecast the trend of the outbreak in the United States in the next two weeks, and estimate the peak of isolated cases and the date of the peak.

The improved SEIQRP model proposed in this paper successfully captures the trend of COVID-19. The long-term prediction needs to adjust the model appropriately according to the change of policy and medical level. We will discuss in the future work.

\section*{Acknowledgment}

The plots in this paper were plotted using the plot code adapted from \cite{24}.

\section*{Conflict of Interest}

We declare that we have no conflict of interest.






\end{document}